\documentclass[twocolumn]{IEEEtran}
\pdfoutput=1
\usepackage{amsthm,amssymb,graphicx,multirow,amsmath,color,amsfonts}
\usepackage[update,prepend]{epstopdf}
\usepackage[noadjust]{cite}
\usepackage[latin1]{inputenc}
\usepackage{tikz}
\usetikzlibrary{arrows,calc}		
\usepackage{bbm} 
\usepackage{pdfpages}
\usepackage{tabulary}
\usepackage{multirow}




\catcode`,\active

\catcode`\,12


\def\nbs{{\mathbf{s}}}

\def\nb0{{\mathbf{0}}}
\def\nb1{{\mathbf{1}}}



\def\ncalB{{\mathcal{B}}}

\def\ncalF{{\mathcal{F}}}

\def\ncalK{{\mathcal{K}}}

\def\ncalN{{\mathcal{N}}}

\def\ncalP{{\mathcal{P}}}

\def\ncalR{{\mathcal{R}}}

\def\ncalX{{\mathcal{X}}}


\def\nbbE{{\mathbb{E}}}

\def\nbbP{{\mathbb{P}}}

\def\nbbR{{\mathbb{R}}}




\newtheorem{lemma}{Lemma}

\newtheorem{definition}{Definition}

\newtheorem{theorem}{Theorem}

\newtheorem{remark}{Remark}

\def\PPPok{\Phi_k}							
\def\PPPo{\Phi}
 
\def\PPPek{\Phi_k^{\rm (e)}	}	
\def\PPPe{\Phi^{\rm (e)}}
\def\PPPoj{\Phi_j}							
\def\PPPej{\Phi_j^{\rm (e)}	}	

\def\densityok{\lambda_k}

\def\densityek{\lambda_k^{\rm (e)}}
\def\intensityok{\Lambda_k}
\def\intensityek{\Lambda_k^{\rm (e)}}


%

\def\rc{\mathtt{R_c}}   

\def\T{\beta}							
\def\sinr{\mathtt{SINR}}			

\def\sir{\mathtt{SIR}}



\allowdisplaybreaks 
\begin{document}


\graphicspath{{./Figures/}}
\title{Downlink Rate Distribution in Heterogeneous Cellular Networks under Generalized Cell Selection
}
\author{Harpreet S. Dhillon and Jeffrey G. Andrews\thanks{
H. S. Dhillon and J. G. Andrews are with the Wireless Networking and Communications Group (WNCG), The University of Texas at Austin, TX, USA (email: dhillon@utexas.edu and jandrews@ece.utexas.edu). \hfill Manuscript updated: \today.
} }

\maketitle

\begin{abstract}
Considering both small-scale fading and long-term shadowing, we characterize the downlink rate distribution at a typical user equipment (UE) in a heterogeneous cellular network (HetNet), where shadowing, following any general distribution, impacts cell selection while fading does not. Prior work either ignores the impact of channel randomness on cell selection or lumps all the sources of randomness into a single variable, with cell selection based on the instantaneous signal strength, which is unrealistic. As an application of the results, we study the impact of shadowing on load balancing in terms of the optimal per-tier selection bias needed for rate maximization.
\end{abstract}

\section{Introduction}
Opportunistic capacity-driven deployment of small cells is recognized as a key solution to keep up with the increasing capacity demand from cellular networks~\cite{AndJ2013}. This paradigm shift in deployments has impacted cellular networks in many ways, two of the prominent ones being: i) the existing macrocells are now joined by femtocells, picocells and distributed antennas, thus increasing the disparity in the BS capabilities, such as transmit power and backhaul capacity, 
and ii) the unplanned deployment of small cells has increased the uncertainty in the BS locations. In the pursuit of accurately capturing these new deployment trends for the analysis of HetNets, random spatial models have emerged as an attractive option~\cite{DhiGanJ2012}. Assuming Poisson Point Process (PPP) model for the BS locations further lends tractability and tools from stochastic geometry have been used to study various aspects of HetNets, see~\cite{ElSHosJ2013} for a survey.

Despite the success of these models, there remain several shortcomings that need to be addressed for realistic performance assessment. One of them is the simplistic set of assumptions for channel and cell selection models. With a key exception of~\cite{KeeBlaC2013}, prior work either ignores the impact of shadowing on cell selection and assumes that a UE always connects to one of the closet BSs of each tier~\cite{JoSanJ2012,DiGuiJ2013}, or lumps all the channel randomness into a single random variable and assumes that cell selection is based on the maximum instantaneous received power~\cite{DhiGanJ2012,MadResJ2012a}. Due to these simplifications, neither of these models is able to capture the fact that the long-term effects such as distance-based path loss and shadowing impact cell selection, while small-scale fading does not. 

Although this letter is in the same spirit as~\cite{KeeBlaC2013}, the main focus of~\cite{KeeBlaC2013} is on the downlink signal-to-interference-plus-noise ratio ($\sinr$), whereas we focus on the downlink data rate that additionally depends upon the load on each BS class. For instance, to maximize downlink rate, it may be preferable to connect to small cells even when they offer poor $\sinr$ in some cases, because owing to their smaller load they can more than compensate by offering a large percentage of time-frequency resources to each UE. 
Leveraging the same general idea of {\em propagation (process) invariance}, as discussed in~\cite{KeeBlaC2013,BlaKeeJ2013}, we show that in addition to the $\sinr$ distribution, the service area approximations resulting from multiplicatively weighted Poisson Voronoi tessellation, and hence the load on each BS class~\cite{SinDhiJ2013}, can be easily extended to the general cell selection model introduced in this letter. Our analysis concretely demonstrates that the effect of shadowing on downlink rate and related metrics, such as rate optimal cell-selection bias, can be equivalently captured by appropriately scaling the transmit powers of each BS class and then simply selecting one of the BSs that are closest in each tier for service.

\section{System Model} \label{sec:sysmod}
Consider a $K$-tier HetNet with $K$ classes of BSs, differing in terms of the transmit power $P_k$, deployment density $\densityok$, and cell-selection bias $B_k$. For notational simplicity, define $\ncalK = \{1,2,\ldots,K\}$. The locations of the $k^{th}$ tier BSs are modeled by an independent PPP $\PPPok$ of density $\lambda_k$. Define $\PPPo = \cup_{k \in \ncalK} \PPPok$. For resource allocation, consider orthogonal partitioning of resources, e.g., time-frequency resource blocks in orthogonal frequency division multiple access (OFDMA), where each resource block is allocated to one UE, and hence there is no intra-cell interference. Modeling UE locations by an independent PPP $\Phi_u$ of density $\lambda_u$, the downlink analysis is performed at a typical UE assumed to be located at the origin~\cite{BacBlaB2009}. 
The received power at a typical UE from a $k^{th}$ tier BS located at $x_k \in \PPPok$ in a given resource block is
\begin{align}
P(x_k) = P_k h_{kx_k} \ncalX_{kx_k} \|x_k\|^{-\alpha},
\label{eq:Pow_received}
\end{align}
where $h_{kx_k} \sim \exp(1)$ models Rayleigh fading, $\ncalX_{kx_k}$ models shadowing, and $\|x_k\|^{-\alpha}$ represents standard power-law path loss with exponent $\alpha$. Note that since $h_{kx_k}$ and $\ncalX_{kx_k}$ are both independent of the location of the BS, we will drop $x_k$ from the subscript and denote the two random variables by $h_k$ and $\ncalX_k$, whenever the location of the BS is clear form the context. In the same spirit as~\cite{KeeBlaC2013,MadResJ2012a}, our analysis is capable of handling any general distribution for $\ncalX_{k}$ as long as $\nbbE\left[\ncalX_{k}^{\frac{2}{\alpha}} \right] < \infty$. The origins of this restriction will be discussed later in this section. The most common assumption for large scale shadowing distribution is lognormal, where $\ncalX_k = 10^{\frac{X_k}{10}}$ such that $X_{k} \sim \ncalN(\mu_k, \sigma_k^2)$, where $\mu_k$ and $\sigma_k$ are respectively the mean and standard deviation in dB of the shadowing channel power. Using the moment generating function (MGF) of Gaussian distribution, the fractional moment is $\nbbE\left[\ncalX_k^{\frac{2}{\alpha}} \right] = \exp \left(\frac{\ln 10}{5} \frac{m_k}{\alpha} + \frac{1}{2} \left(\frac{\ln 10}{5} \frac{\sigma_k}{\alpha} \right)^2 \right)$, which is clearly finite if both the mean and standard deviation of the normal random variable $X_{k}$ are finite.

Since fading gain $h_{x}$ changes over much smaller time-scale, and in a frequency selective channel (such as one using OFDM) can be averaged or mitigated in the frequency domain, we assume that it does not impact cell selection. Each UE connects to the BS that provides the highest long-term {\em biased received power}, as explained below. Denote the location of the candidate $k^{th}$ tier serving BS by $x^*_k$, i.e., $x^*_k = \arg \max_{x \in \PPPok} P_k \ncalX_x \|x\|^{-\alpha}$. From these $K$ candidate serving BSs, a typical UE connects to $x^* = \arg \max_{x\in \{x^*_k\} } B_k P_k \ncalX_x \|x\|^{-\alpha}$,
where $B_k > 0$ is the selection bias introduced to expand the range of small cells to balance load across the network~\cite{SinDhiJ2013}. The inclusion of shadowing in cell selection is facilitated by displacement theorem~\cite{BacBlaB2009}, where the key insight is to express the received power given by~\eqref{eq:Pow_received} as
$P(x_k) = P_k h_{kx_k} \|\ncalX_{kx_k}^{-\frac{1}{\alpha}} x_k\|^{-\alpha}$,
where the long-term shadowing effects can be interpreted as a random displacement of the location of the BS originally placed at $x_k \in \PPPok$. We make this notion precise in the following Lemma. Also see~\cite{KeeBlaC2013,MadResJ2012a} for the application of this general idea to handle general shadowing or fading distributions in slightly different setups. 

\begin{lemma} \label{lem:mapping}
For a homogeneous PPP $\PPPok \subset \nbbR^2$ with density $\densityok$, if each point $x \in \PPPok$ is transformed to $y \in \nbbR^2$ such that $y = \ncalX_{k}^{-\frac{1}{\alpha}} x$, where $\{\ncalX_k\}$ are i.i.d., such that $\nbbE\left[\ncalX_{k}^{\frac{2}{\alpha}} \right] < \infty$, the new point process $\PPPek \subset \nbbR^2$ defined by the transformed points $y$ is also a homogeneous PPP with density $\densityek = \densityok \nbbE\left[\ncalX_{k}^{\frac{2}{\alpha}} \right]$.
\end{lemma}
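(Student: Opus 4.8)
The plan is to recognize the transformation as a displacement of the Poisson point process $\PPPok$ and to derive the result from the marking and mapping theorems for PPPs~\cite{BacBlaB2009}. First I would attach to each point $x \in \PPPok$ an independent mark equal to its shadowing value $\ncalX_k$; by the marking theorem the collection $\{(x,\ncalX_k)\}$ is a PPP on $\nbbR^2 \times (0,\infty)$ whose intensity measure factorizes as $\densityok \,\mathrm{d}x\,\mathrm{d}F(s)$, where $F$ denotes the common law of the $\ncalX_k$.

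Next I would view the displacement $x \mapsto y = \ncalX_k^{-1/\alpha} x$ as the deterministic map $T:(x,s)\mapsto s^{-1/\alpha}x$ applied to this marked process, so that $\PPPek = \{T(x,\ncalX_k)\}$. The mapping theorem then guarantees that $\PPPek$ is again a PPP, with intensity measure equal to the pushforward of $\densityok\,\mathrm{d}x\,\mathrm{d}F(s)$ under $T$. Evaluating this pushforward on a Borel set $A \subset \nbbR^2$ gives
\begin{align}
\intensityek(A) = \int_0^\infty \int_{\nbbR^2} \mathbbm{1}\{s^{-1/\alpha}x \in A\}\,\densityok\,\mathrm{d}x\,\mathrm{d}F(s).
\end{align}
Performing the inner integral by the substitution $y = s^{-1/\alpha}x$, whose Jacobian in $\nbbR^2$ contributes a factor $s^{2/\alpha}$, collapses it to $\densityok\, s^{2/\alpha}\,|A|$; integrating against $F$ then yields $\intensityek(A) = \densityok\,\nbbE[\ncalX_k^{2/\alpha}]\,|A|$, a constant multiple of Lebesgue measure. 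Hence $\PPPek$ is homogeneous with the claimed density $\densityek = \densityok\,\nbbE[\ncalX_k^{2/\alpha}]$.

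I expect the only real obstacle to be checking the hypotheses of the mapping theorem, and this is precisely where the moment assumption is used. The theorem requires the pushforward intensity to be non-atomic and locally finite: non-atomicity is automatic because $\intensityek$ is absolutely continuous with respect to Lebesgue measure, while the computation above shows that local finiteness on bounded sets is \emph{equivalent} to $\nbbE[\ncalX_k^{2/\alpha}] < \infty$. Were this fractional moment to diverge, every bounded set would carry infinite intensity and the displaced configuration would fail to be a locally finite point process; the stated condition is therefore exactly what is needed for $\PPPek$ to be a well-defined homogeneous PPP, with nothing further to verify.
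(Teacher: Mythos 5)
Your proof is correct and follows essentially the same route as the paper: the paper invokes the displacement theorem of~\cite{BacBlaB2009} with the kernel $p(x,A) = \nbbP(\ncalX_k^{-1/\alpha}x \in A)$, which is precisely your marking-plus-mapping argument packaged into a single theorem, and the core computation (the change of variables in $\nbbR^2$ producing the Jacobian factor $s^{2/\alpha}$ and hence the intensity $\densityok\,\nbbE[\ncalX_k^{2/\alpha}]\,|A|$) is identical. Your closing observation that local finiteness of the transformed intensity is exactly the moment condition $\nbbE[\ncalX_k^{2/\alpha}]<\infty$ also matches the paper's final remark.
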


\begin{proof}
Let $\ncalB(\nbbR^2)$ be the Borel $\sigma$-algebra on $\nbbR^2$. For $A \in \ncalB(\nbbR^2)$, the intensity measure $\Lambda(A)$ of a homogeneous PPP $\PPPok$ is $\intensityok(A) = \densityok |A|$, where $|A|$ denotes the Lebesgue measure of A. By displacement theorem~\cite[Theorem 1.3.9]{BacBlaB2009}, the transformation of a PPP $\PPPok$ with probability kernel $p(x,A)$ is a PPP with intensity measure:
\begin{align}
\intensityek(A) &= \int_{\nbbR^2} p(x,A) \intensityok({\rm d}x) 
\stackrel{(a)}{=} \nbbE  \int_{\nbbR^2} \nb1(\ncalX_k^{-\frac{1}{\alpha}} x \in A) \densityok {\rm d}x \nonumber \\
&= \nbbE \left[ \int_{\nbbR^2} \nb1\left(x \in A \ncalX_k^{\frac{1}{\alpha}}\right) \densityok {\rm d}x \right]
= \densityok |A| \nbbE \left[\ncalX_k^{\frac{2}{\alpha}}\right], \nonumber
\end{align}
where $(a)$ follows by using the kernel specific to this Lemma.  
Since $\{\ncalX_k\}$ are i.i.d. and independent of the location $x$, setting $|A| = {\rm d} y$, we get $\intensityek(A)({\rm d} y) = \densityok \nbbE \left[\ncalX_k^{\frac{2}{\alpha}}\right] {\rm d} y = \densityek {\rm d}y$. For a PPP, we need its intensity measure to be locally finite, which leads to the condition $\densityek = \densityok \nbbE \left[\ncalX_k^{\frac{2}{\alpha}}\right] < \infty$.
\end{proof}

An immediate consequence of this Lemma is the characterization of received power in terms of the equivalent PPP $\PPPek$ with density $\densityek = \densityok \nbbE \ncalX_{k}^{\frac{2}{\alpha}} $. Defining $y_k = \ncalX_{kx_k}^{-\frac{1}{\alpha}} x_k$, the received power can be equivalently expressed as 
$P(y_k) = P_k h_{k y_k} \|y_k\|^{-\alpha}$, using which the location of the candidate serving BS in $k^{th}$ tier can be equivalently expressed as
$y_k^* = \arg \max_{y \in \PPPek} P_k \|y\|^{-\alpha}$. Note that $y_k^*$ is simply the closest point to the origin of the equivalent point process $\PPPek$. The location of the serving BS in the equivalent PPP $\PPPe = \cup_{k \in \ncalK} \PPPek$ can be similarly expressed as 
$y^* = \arg \max_{y \in \{y_k^*\} } B_k P_k \|y\|^{-\alpha}$.
For notational simplicity define $\nbs \in \nbbR^K$, such that $\nbs(k) \in \{0,1\}$, $\sum_{k \in \ncalK} \nbs(k) = 1$, and
$\nbs(k) = \nb1(x^* = x_k^*) = \nb1(x^* \in \PPPok)$, 
which implies that the $k^{th}$ element of $\nbs$ takes value $1$ if the serving BS belongs to $k^{th}$ tier. We ignore thermal noise, i.e., network is interference-limited, and assume a full-buffer model for the interfering BSs~\cite{DhiGanJ2012}, i.e., all the interferers are always active. The signal-to-interference ratio $(\sir)$ at the typical UE when $\nbs(k)=1$ is
\begin{align}
\sir(x^*) &= \frac{P_k h_{kx^*} \ncalX_{kx^*} \|x^*\|^{-\alpha}}{\sum_{j \in \ncalK} \sum_{z \in \PPPoj\setminus \{x^*\}} P_j h_{jz} \ncalX_{jz} \|z\|^{-\alpha}} \\
&\stackrel{d}{=} \frac{P_k h_{ky^*}  \|y^*\|^{-\alpha}}{\sum_{j \in \ncalK} \sum_{z \in \PPPej\setminus \{y^*\}} P_j h_{jz} \|z\|^{-\alpha}} = \sir(y^*), \nonumber
\end{align}
where $d$ denotes equivalence in distribution, which follows from Lemma~\ref{lem:mapping}. Due to this equivalence, the results based solely on $\sir$ or $\sinr$ distributions, such as coverage probability, derived under the assumption that a typical UE always connects to one of the BSs that are closest in each tier, e.g.,~\cite{JoSanJ2012}, can be easily extended to the general selection model by considering equivalent BS densities $\{\densityek\}$. This has also been independently shown for coverage probability in~\cite{KeeBlaC2013}. In the next section, we establish a similar equivalence for downlink rate distribution, that additionally depends upon the BS load.

\section{Downlink Rate Distribution}
In this section, we generalize the main premise of~\cite{SinDhiJ2013}, and characterize the downlink rate coverage under generalized cell-selection model introduced in the previous section.

\begin{definition}[Rate coverage]
Rate coverage $\rc$ is the probability that the downlink rate $\ncalR$ achievable at a typical UE is higher than a predefined lowest rate $T$ required by a given application, i.e., $\rc = \nbbP(\ncalR > T)$. Being the complementary cumulative distribution function (CCDF), $\rc$ completely characterizes the rate distribution. 
\end{definition}

We term the serving BS $x^* \in \PPPo$ of a typical UE as a ``tagged'' BS. Denote the number of UEs served by the tagged BS by $\Psi_k$, where subscript $k$ is for the tier to which this BS belongs. Clearly, $\Psi_k$ is a random variable with the following two sources of randomness: i) the area of the region that the tagged BS serves, in short service area, and ii) conditioned on the area of the service region, the number of UEs served by the tagged BS is a Poisson random variable. For tractability, we assume that each BS allocates equal time-frequency resources to its UEs, i.e., each UE gets rate proportional to the spectral efficiency of its downlink channel from the serving BS. For total effective bandwidth $W$ Hz, the downlink rate in bits/sec of a typical UE when it connects to a $k^{th}$ tier BS is
\begin{align}
\ncalR_k = \frac{W}{\Psi_k} \log_2 \left(1 + \sir(x_k^*) \right).
\end{align}
Note that $\sir(x_k^*)$ and $\Psi_k$ are in general correlated, e.g., when $\|x_k^*\|$ is large, the serving BS is far from the typical UE. This information skews the distribution of the service area of the tagged BS, and hence of $\Psi_k$,  towards larger values. However, characterizing the joint distribution of $\Psi_k$ and $\sir(x_k^*)$ is out of the scope of this paper. For tractability, we assume the two random variables to be independent, which does not compromise the accuracy of our analysis~\cite{SinDhiJ2013}. 
Under this assumption, the rate coverage $\rc$ is
\begin{align}
&\rc = \nbbP[\ncalR > T] \stackrel{(a)}{=} \sum_{k \in \ncalK} \nbbP[\ncalR > T| \nbs(k)=1] \nbbP[\nbs(k)=1]\\
&= \sum_{k \in \ncalK} \nbbE_{\Psi_k} \nbbP \left( \ncalR_k > T \right) \nbbP[\nbs(k)=1] \\
&= \sum_{k \in \ncalK} \nbbE_{\Psi_k} \nbbP \left(\sir(x_k^*) > 2^{\frac{T}{W} \Psi_k} - 1 \right) \nbbP[\nbs(k)=1] \\
&\stackrel{(b)}{=}  \sum_{k\in \ncalK} \sum_{n=1}^{\infty} \underbrace{\nbbP \left(\sir(x_k^*) > \T_n \right)}_\textrm{Conditional $\sir$ distribution} 
\underbrace{\nbbP(\Psi_k = n)}_\textrm{Load}
\underbrace{\nbbP[\nbs(k)=1]}_\textrm{Selection probability}, \label{eq:Rc_defn}
\end{align} 
where $(a)$ follows from the total probability theorem, and $(b)$ follows by defining $\T_n = 2^{\frac{T}{W} n} - 1$ for notational simplicity. We now compute the three probability terms starting with the selection probability, which we denote by $\ncalP_k$.
\begin{lemma}[Selection probability]\label{lem:selection}
The probability that a typical UE connects to a $k^{th}$ tier BS is given by
\begin{align}
\ncalP_k = \nbbP(\nbs(k)=1) = \frac{ \lambda_k \nbbE \left[\ncalX_k^{\frac{2}{\alpha}} \right] B_k^{\frac{2}{\alpha}} P_k^{\frac{2}{\alpha}}}{\sum_{j \in \ncalK}  \lambda_j \nbbE \left[\ncalX_j^{\frac{2}{\alpha}} \right] B_j^{\frac{2}{\alpha}} P_j^{\frac{2}{\alpha}}}.
\end{align}
\end{lemma}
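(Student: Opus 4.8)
The plan is to leverage Lemma~\ref{lem:mapping} to discard shadowing entirely and reduce the problem to a multi-tier nearest-BS association on the equivalent, unshadowed processes $\PPPek$, each a homogeneous PPP of density $\densityek = \densityok \nbbE[\ncalX_k^{\frac{2}{\alpha}}]$. As set up above, the candidate serving BS of tier $k$ is the nearest point $y_k^*$ of $\PPPek$ to the origin (within a tier $B_k$ and $P_k$ are common, so the biased-power maximizer coincides with the closest point), and $\nbs(k)=1$ exactly when $B_k P_k \|y_k^*\|^{-\alpha} > B_j P_j \|y_j^*\|^{-\alpha}$ for every $j \neq k$. Thus the only random quantities that matter are the per-tier contact distances $R_k := \|y_k^*\|$, and since the $\{\PPPek\}$ are independent across tiers, so are the $\{R_k\}$.

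First I would write down the distribution of each $R_k$. By the void probability of a homogeneous PPP in $\nbbR^2$, $\nbbP(R_k > r) = \exp(-\pi \densityek r^2)$, so that $R_k$ has density $f_{R_k}(r) = 2\pi \densityek r \exp(-\pi \densityek r^2)$ for $r \ge 0$. Rewriting the winning condition as $R_j > (B_j P_j/(B_k P_k))^{\frac{1}{\alpha}} R_k$ for all $j \neq k$ and conditioning on $R_k = r$, cross-tier independence gives
\[
\nbbP(\nbs(k)=1 \mid R_k = r) = \prod_{j \neq k} \exp\!\left(-\pi \lambda_j^{\rm (e)} \left(\tfrac{B_j P_j}{B_k P_k}\right)^{\frac{2}{\alpha}} r^2\right).
\]

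Next I would remove the conditioning by integrating against $f_{R_k}$. The integrand is of the form $2\pi \densityek r \exp(-c\,r^2)$ with $c = \pi[\densityek + \sum_{j \neq k} \lambda_j^{\rm (e)} (B_j P_j/(B_k P_k))^{\frac{2}{\alpha}}]$, a standard Gaussian-type integral evaluating to $\pi \densityek / c$. After cancelling $\pi$ and multiplying numerator and denominator by $(B_k P_k)^{\frac{2}{\alpha}}$, the $R_k$-dependence disappears and I obtain $\ncalP_k = \densityek B_k^{\frac{2}{\alpha}} P_k^{\frac{2}{\alpha}} / \sum_{j \in \ncalK} \lambda_j^{\rm (e)} B_j^{\frac{2}{\alpha}} P_j^{\frac{2}{\alpha}}$. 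Substituting $\densityek = \densityok \nbbE[\ncalX_k^{\frac{2}{\alpha}}]$ from Lemma~\ref{lem:mapping} yields the claimed expression.

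The computation itself is the routine HetNet max-biased-power association argument; the genuinely load-bearing step is the reduction via Lemma~\ref{lem:mapping}, which is what lets the shadowing variables $\ncalX_k$ enter only through the fractional moments $\nbbE[\ncalX_k^{\frac{2}{\alpha}}]$ inside the equivalent densities. The only points needing a word of care are that almost surely no ties occur (so the events $\{\nbs(k)=1\}$ partition the sample space and the $\ncalP_k$ sum to one), and that the independence used to factor the conditional probability is inherited from the independence of the original PPPs $\{\PPPok\}$, which the displacement transformation of Lemma~\ref{lem:mapping} preserves.
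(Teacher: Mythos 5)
Your proposal is correct and follows essentially the same route as the paper: both reduce the problem via Lemma~\ref{lem:mapping} to nearest-BS biased association on the equivalent unshadowed PPPs $\PPPek$ with densities $\densityek = \densityok \nbbE[\ncalX_k^{2/\alpha}]$. The only difference is that the paper then cites Lemma~1 of~\cite{JoSanJ2012} for the resulting computation, whereas you carry out that standard contact-distance/void-probability integral explicitly, and you do so correctly.
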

\begin{IEEEproof}
The selection probability is
\begin{align}
\ncalP_k =\nbbP(\nbs(k)=1) \stackrel{(a)}{=} \nbbP (x^* = x_k^*) \stackrel{(b)}{=} \nbbP (y^* = y_k^*),
\end{align}
where $\{x_k^*\}$ in $(a)$ is the set of candidate serving BSs in $\PPPo$, $\{y_k^*\}$ in $(b)$ is the set of candidate serving BSs in $\PPPe$, and $(b)$ additionally follows from Lemma~\ref{lem:mapping}. Recall that the candidate serving BS $y_k^*$ is the closest point of the PPP $\PPPek$ to the origin, which reduces to the same setup as~\cite{JoSanJ2012}. The rest of the proof follows from the Lemma 1 of~\cite{JoSanJ2012} using the fact that the density of $\PPPek$ is $\densityek = \lambda_k \nbbE \left[\ncalX_k^{\frac{2}{\alpha}} \right]$. 
\end{IEEEproof}

We now derive the conditional $\sir$ distribution, i.e., $\sir$ distribution conditioned on $\nbs(k)=1$. The proof follows directly from Theorem 1 of~\cite{JoSanJ2012} after invoking displacement theorem as was done for Lemma~\ref{lem:selection}, and is hence skipped.

\begin{lemma}[Conditional $\sir$ distribution] \label{lem:pertierPc}
The conditional $\sir$ distribution is $\nbbP(\sir(x_k^*) > \T)=$
\begin{align}
\frac{\sum_{j\in\ncalK} \lambda_j \nbbE \left[\ncalX_j^{\frac{2}{\alpha}} \right] B_j^{\frac{2}{\alpha}} P_j^{\frac{2}{\alpha}}}
{\sum_{j\in\ncalK} \lambda_j \nbbE \left[\ncalX_j^{\frac{2}{\alpha}} \right] P_j^{\frac{2}{\alpha}} \left[B_j^{\frac{2}{\alpha}} + B_k^{\frac{2}{\alpha}} \ncalF\left(\T, \alpha, \frac{B_j}{B_k} \right)\right]  }
\end{align}
where
\begin{align}
\ncalF(\T, \alpha, z) &= \left( \frac{2 \T z^{\frac{2}{\alpha}-1}}{\alpha - 2} \right) {}_2F_1 \left[1,1-\frac{2}{\alpha}, 2 - \frac{2}{\alpha}, -\frac{\T}{z} \right], \nonumber
\end{align}
and ${}_2F_1[a,b,c,z] = \frac{\Gamma(c)}{\Gamma(b) \Gamma(c-b)} \int_{0}^1 \frac{t^{b-1} (1-t)^{c-b-1}}{(1-tz)^a} {\rm d} t$ denotes the Gauss hypergeometric function. 
\end{lemma}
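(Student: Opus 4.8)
The plan is to reduce the computation to the standard no-shadowing HetNet of~\cite{JoSanJ2012} via the equivalence of Lemma~\ref{lem:mapping}. After the displacement map, each tier becomes a homogeneous PPP $\PPPek$ of density $\densityek=\lambda_k\nbbE[\ncalX_k^{2/\alpha}]$, the serving BS $y^*=\arg\max_{y\in\{y_k^*\}}B_kP_k\|y\|^{-\alpha}$ is simply the strongest \emph{biased-nearest} BS, and, crucially, the Rayleigh fading is independent of the displacement and so survives the map unchanged. It therefore suffices to reproduce the conditional-coverage argument of~\cite[Theorem~1]{JoSanJ2012} with $\densityok$ replaced by $\densityek$.

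First I would condition on $\nbs(k)=1$ and on the serving distance $r=\|y^*\|$. Because $h_{ky^*}\sim\exp(1)$, the conditional coverage is the Laplace transform of the aggregate interference $I=\sum_{j\in\ncalK}\sum_{z\in\PPPej\setminus\{y^*\}}P_jh_{jz}\|z\|^{-\alpha}$ evaluated at $s=\T r^{\alpha}/P_k$, i.e. $\nbbP(\sir(y^*)>\T\mid r,\nbs(k)=1)=\ncalL_I(\T r^{\alpha}/P_k)$ with $\ncalL_I(s)=\nbbE[e^{-sI}]$. The cell-selection rule fixes the interference geometry: since $y^*$ maximizes biased received power, every tier-$j$ interferer obeys $B_jP_j\|z\|^{-\alpha}\le B_kP_k r^{-\alpha}$, hence lies outside a disc of radius $\rho_j=(B_jP_j/(B_kP_k))^{1/\alpha}r$.

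Next I would evaluate $\ncalL_I$ using independence of the $\PPPej$ and the PGFL of each PPP $\PPPej$ (density $\lambda_j^{\rm (e)}\nbydef\lambda_j\nbbE[\ncalX_j^{2/\alpha}]$) over $\{\|z\|\ge\rho_j\}$, with $\nbbE_h[e^{-sP_jhv^{-\alpha}}]=(1+sP_jv^{-\alpha})^{-1}$. This gives a product over tiers of $\exp(-2\pi\lambda_j^{\rm (e)}\int_{\rho_j}^{\infty}\frac{sP_jv^{-\alpha}}{1+sP_jv^{-\alpha}}v\,\mathrm{d}v)$; substituting $s=\T r^{\alpha}/P_k$ and rescaling $v=(\T P_j/P_k)^{1/\alpha}x$ pulls out $r^2$ cleanly and turns the lower limit into $(B_j/(B_k\T))^{1/\alpha}$, so each tier integral becomes $\lambda_j^{\rm (e)}P_j^{2/\alpha}r^2$ times a function of $(\T,\alpha,B_j/B_k)$ only, which is exactly the Gauss hypergeometric factor $\ncalF(\T,\alpha,B_j/B_k)$. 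Finally I would decondition over $r$ using the tier-$k$ serving-distance density obtained as in Lemma~\ref{lem:selection} from the per-tier void probabilities $\exp(-\pi\lambda_j^{\rm (e)}\rho_j^2)$, namely $f_{r\mid k}(r)=\frac{2\pi\densityek}{\ncalP_k}r\exp(-\pi r^2\sum_j\lambda_j^{\rm (e)}(B_jP_j/(B_kP_k))^{2/\alpha})$. Both $\ncalL_I$ and $f_{r\mid k}$ are Gaussian in $r$, so $\int_0^\infty r e^{-\pi r^2 C}\,\mathrm{d}r=1/(2\pi C)$ collapses the integral, the $\ncalP_k$ from Lemma~\ref{lem:selection} cancels against the distance normalization, and the stated ratio drops out.

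The main obstacle is bookkeeping rather than a single hard step: one must keep the exclusion radii $\rho_j$, the equivalent densities $\lambda_j^{\rm (e)}$, the powers $P_j^{2/\alpha}$, and the bias ratios $B_j/B_k$ consistent through the PGFL and the deconditioning, and verify that $2\T^{2/\alpha}\int_{(B_j/(B_k\T))^{1/\alpha}}^{\infty}\frac{x}{1+x^{\alpha}}\,\mathrm{d}x$ really collapses to the closed-form ${}_2F_1$ defining $\ncalF$. That hypergeometric is finite only for $\alpha>2$, the usual interference-limited requirement; separately, the moment condition $\nbbE[\ncalX_k^{2/\alpha}]<\infty$ from Lemma~\ref{lem:mapping} is what keeps each $\densityek$ finite. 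Provided the fading is correctly retained through the displacement map, the remainder is the routine Rayleigh/PPP computation of~\cite{JoSanJ2012}.
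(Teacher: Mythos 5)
Your proposal is correct and follows the same route as the paper: the paper's proof is precisely the displacement-theorem reduction (Lemma~\ref{lem:mapping}) to an equivalent no-shadowing network with densities $\lambda_j \nbbE[\ncalX_j^{2/\alpha}]$, followed by a direct invocation of Theorem~1 of~\cite{JoSanJ2012}, which the paper therefore skips. You simply unpack the internals of that cited theorem (Laplace transform of interference, exclusion radii, PGFL, deconditioning over the serving distance), and your bookkeeping, including the reduction of $2\T^{2/\alpha}\int x/(1+x^{\alpha})\,{\rm d}x$ to the stated ${}_2F_1$ form, checks out.
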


For the distribution of load $\Psi_k$, we use the approximation proposed in Lemma 3 of~\cite{SinDhiJ2013}. The main idea is to approximate the service area of a typical $k^{th}$ tier BS by the area of a typical Poisson Voronoi with the same mean $\frac{\ncalP_k}{\lambda_k}$. Now since a typical UE has a higher chance of selecting a BS with bigger service area, the area of the tagged BS is biased towards being larger than a typical BS of the same tier. This is similar to the waiting bus paradox associated with Point processes in $\nbbR$. Accounting for this bias, the load distribution is given below. The proof is the same as Lemma 3 of~\cite{SinDhiJ2013} with the understanding that the effect of shadowing on cell selection is captured by $\ncalP_k$.

\begin{lemma}[Load on tagged BS]    \label{lem:load_tagged}
The distribution of the load served by $x_k^*$ is $\nbbP(\Psi_k = n+1) = $
\begin{align}
\frac{3.5^{3.5}}{n!} \frac{\Gamma(n+4.5)}{\Gamma(3.5)} \left(\frac{\lambda_u \ncalP_k}{\lambda_k} \right)^n \left(3.5 +  \frac{\lambda_u \ncalP_k}{\lambda_k}  \right)^{-(n+4.5)}. \nonumber
\end{align}
The mean load is $\nbbE[\Psi_k] = 1 + 1.28 \frac{\lambda_u \ncalP_k}{\lambda_k}$.
\end{lemma}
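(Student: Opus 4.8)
The plan is to reuse the Poisson--Voronoi machinery of Lemma~3 in~\cite{SinDhiJ2013}, with the single change that the entire effect of shadowing and biasing is now folded into the selection probability $\ncalP_k$ of Lemma~\ref{lem:selection}. Working in the equivalent process $\PPPe$ (legitimate by Lemma~\ref{lem:mapping}), each tier-$k$ BS is the nucleus of a multiplicatively weighted Voronoi cell, and the first step is to fix the mean service area of such a cell. Since a typical UE lands in a tier-$k$ cell with probability $\ncalP_k$ while the tier-$k$ nuclei have density $\lambda_k$, the mean tier-$k$ cell area equals $\ncalP_k/\lambda_k$; this is the only place where the shadowing distribution enters.

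Next I would replace the exact (intractable) weighted-cell area law by the standard gamma fit for a typical planar Poisson--Voronoi cell, rescaled to the correct mean: writing $A$ for the area and $\mu \nbydef \ncalP_k/\lambda_k$ for its mean, take $A$ to be $\mathrm{Gamma}$ with shape $c=3.5$ and rate $c/\mu$. The tagged cell, however, is the one containing the origin UE, so its area is \emph{area-biased}: its density is $a\,f_A(a)/\nbbE[A]$, which for a gamma law simply raises the shape by one, giving $\tilde A \sim \mathrm{Gamma}(c+1,\,c/\mu) = \mathrm{Gamma}(4.5,\,3.5/\mu)$. This size-biasing is the analytic content that captures the waiting-bus effect noted before the statement.

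It then remains to mix a Poisson count over this biased area. By Slivnyak's theorem the remaining UEs form an independent PPP of density $\lambda_u$, so conditioned on $\tilde A = a$ the number of other UEs sharing the tagged BS is Poisson with mean $\lambda_u a$, whence $\Psi_k = 1 + \mathrm{Poisson}(\lambda_u \tilde A)$ and
\begin{align}
\nbbP(\Psi_k = n+1) = \int_0^\infty \frac{(\lambda_u a)^n e^{-\lambda_u a}}{n!}\, \frac{(c/\mu)^{c+1}}{\Gamma(c+1)}\, a^{c}\, e^{-c a/\mu}\, {\rm d}a. \nonumber
\end{align}
This is a single gamma integral; evaluating it and setting $\rho \nbydef \lambda_u \ncalP_k/\lambda_k$ yields the negative-binomial-type form $\frac{c^{c+1}}{n!}\frac{\Gamma(n+c+1)}{\Gamma(c+1)}\,\rho^n(c+\rho)^{-(n+c+1)}$, which coincides with the claimed expression once $c=3.5$ and the identities $\Gamma(4.5)=3.5\,\Gamma(3.5)$ and $3.5^{4.5}=3.5\cdot 3.5^{3.5}$ collapse the constant $3.5^{c+1}/\Gamma(c+1)$ to $3.5^{3.5}/\Gamma(3.5)$. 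For the mean I would use $\nbbE[\Psi_k] = 1 + \lambda_u\,\nbbE[\tilde A]$ with $\nbbE[\tilde A] = \tfrac{c+1}{c}\mu = \tfrac{4.5}{3.5}\mu$, so that $\nbbE[\Psi_k] = 1 + \tfrac{9}{7}\rho \approx 1 + 1.28\,\rho$.

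The routine part is the gamma--Poisson integral and the constant bookkeeping; the genuine modelling obstacle is the double approximation hidden in the first two steps, namely replacing the \emph{weighted} Voronoi service region by an \emph{unweighted} typical Poisson--Voronoi cell of matched mean, and then invoking the empirical gamma fit for the latter. Both are inherited unchanged from~\cite{SinDhiJ2013}, so the only thing I must actually check is that the shadowing model does not break this inheritance, i.e. that after the displacement of Lemma~\ref{lem:mapping} the tagged-cell area depends on the shadowing distribution solely through the mean $\ncalP_k/\lambda_k$, which is precisely what Lemma~\ref{lem:selection} secures.
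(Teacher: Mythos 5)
Your proposal is correct and follows essentially the same route as the paper, whose proof simply defers to Lemma~3 of~\cite{SinDhiJ2013} (gamma fit for the typical Poisson--Voronoi cell area with matched mean $\ncalP_k/\lambda_k$, area-biasing of the tagged cell raising the shape from $3.5$ to $4.5$, then Poisson mixing) with the single observation that shadowing enters only through $\ncalP_k$. You have merely written out in full the gamma--Poisson computation that the paper imports by citation, and your constants and the $1+\tfrac{9}{7}\cdot\frac{\lambda_u \ncalP_k}{\lambda_k}\approx 1+1.28\,\frac{\lambda_u \ncalP_k}{\lambda_k}$ mean check out.
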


Substituting Lemmas~\ref{lem:selection},~\ref{lem:pertierPc}, and~\ref{lem:load_tagged} in \eqref{eq:Rc_defn}, we get a fairly simple expression for rate coverage.

\begin{theorem}[Rate coverage] \label{thm:Rc}
The rate coverage is $\rc =$
\begin{align}
 \sum_{k=1}^K \sum_{n\geq 0} 
 \frac{\lambda_k \nbbE \left[\ncalX_k^{\frac{2}{\alpha}} \right] B_k^{\frac{2}{\alpha}} P_k^{\frac{2}{\alpha}} \nbbP(\Psi_k = n+1)}
{\sum_{j\in\ncalK} \lambda_j \nbbE \left[\ncalX_j^{\frac{2}{\alpha}} \right] P_j^{\frac{2}{\alpha}} \left[B_j^{\frac{2}{\alpha}} + B_k^{\frac{2}{\alpha}} \ncalF\left(\T_{n+1}, \alpha, \frac{B_j}{B_k} \right)\right]  }  \nonumber   
\end{align}
where $\ncalP_k$ is given by Lemma~\ref{lem:selection}, $\nbbP(\Psi_k = n+1)$ by Lemma~\ref{lem:load_tagged}, and recall that $\T_{n+1} = 2^{\frac{T}{W} (n+1)} -1$.
\end{theorem}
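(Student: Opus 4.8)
The plan is to establish the result by direct substitution of Lemmas~\ref{lem:selection},~\ref{lem:pertierPc}, and~\ref{lem:load_tagged} into the decomposition~\eqref{eq:Rc_defn}, followed by one algebraic cancellation and a reindexing of the summation. I would begin from~\eqref{eq:Rc_defn}, which already expresses $\rc$ as a double sum, over tiers $k\in\ncalK$ and over the load value $n$, of the product of three factors: the conditional $\sir$ complementary distribution $\nbbP(\sir(x_k^*)>\T_n)$, the load mass $\nbbP(\Psi_k=n)$, and the selection probability $\nbbP(\nbs(k)=1)=\ncalP_k$. The total-probability argument and the assumed independence of $\Psi_k$ and $\sir(x_k^*)$ that are needed to reach~\eqref{eq:Rc_defn} are already in place, so no further probabilistic work is required here; the remaining task is purely algebraic.

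The key observation, and the step that makes the final expression compact, is a cancellation between two of the substituted factors. The denominator of the selection probability $\ncalP_k$ in Lemma~\ref{lem:selection} is exactly $\sum_{j\in\ncalK}\lambda_j\nbbE[\ncalX_j^{2/\alpha}]B_j^{2/\alpha}P_j^{2/\alpha}$, which coincides term-for-term with the numerator of the conditional $\sir$ distribution in Lemma~\ref{lem:pertierPc}. I would therefore form the product $\ncalP_k\,\nbbP(\sir(x_k^*)>\T_n)$ first; the common factor cancels, leaving the numerator $\lambda_k\nbbE[\ncalX_k^{2/\alpha}]B_k^{2/\alpha}P_k^{2/\alpha}$ and the single denominator sum $\sum_{j\in\ncalK}\lambda_j\nbbE[\ncalX_j^{2/\alpha}]P_j^{2/\alpha}[B_j^{2/\alpha}+B_k^{2/\alpha}\ncalF(\T_n,\alpha,B_j/B_k)]$, in which all of the $\T_n$-dependence is now confined to $\ncalF$.

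It then remains to attach the load factor and align the summation index with the parametrization of Lemma~\ref{lem:load_tagged}. Since the tagged BS always serves at least the typical UE, $\Psi_k\geq 1$, so the outer sum in~\eqref{eq:Rc_defn} effectively starts at $n=1$, whereas Lemma~\ref{lem:load_tagged} supplies the mass in the shifted form $\nbbP(\Psi_k=n+1)$. I would perform the substitution $n\mapsto n+1$, so that the index runs over $n\geq 0$ while the threshold becomes $\T_{n+1}=2^{(T/W)(n+1)}-1$ and the load mass becomes $\nbbP(\Psi_k=n+1)$. Collecting the surviving numerator, the cancelled denominator evaluated at $\T_{n+1}$, and the load mass under the double sum $\sum_{k=1}^K\sum_{n\geq 0}$ reproduces exactly the stated expression.

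I do not anticipate a genuine obstacle, since every ingredient is furnished by the preceding lemmas and the argument is a substitution. The only two points requiring care are recognizing the numerator--denominator cancellation between Lemmas~\ref{lem:selection} and~\ref{lem:pertierPc}, which is what reduces three ratios to a single ratio per summand, and carrying out the index shift consistently so that the load argument $n+1$ and the $\sir$ threshold $\T_{n+1}$ refer to the same value.
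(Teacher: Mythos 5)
Your proposal is correct and matches the paper's approach exactly: the paper proves Theorem~\ref{thm:Rc} by substituting Lemmas~\ref{lem:selection},~\ref{lem:pertierPc}, and~\ref{lem:load_tagged} into~\eqref{eq:Rc_defn}, which is precisely your substitution, cancellation of the common factor $\sum_{j\in\ncalK}\lambda_j\nbbE[\ncalX_j^{2/\alpha}]B_j^{2/\alpha}P_j^{2/\alpha}$ between $\ncalP_k$ and the conditional $\sir$ distribution, and index shift to $n\geq 0$ with load mass $\nbbP(\Psi_k=n+1)$ and threshold $\T_{n+1}$. In fact, your write-up is more explicit than the paper's one-line proof, as it spells out the algebraic cancellation and the reindexing that the paper leaves implicit.
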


We will validate the load approximation and study the effect of shadowing on load balancing in the next section. This section is concluded with the following remarks.
\begin{remark}[Invariance] \label{rem:invariance}
If the shadowing distribution is such that $\nbbE \left[\ncalX_k^{\frac{2}{\alpha}} \right] = c$, for all $k \in \ncalK$, the downlink rate distribution is invariant to the shadowing distributions of all the tiers.
\end{remark}

\begin{remark}[Equivalent HetNet model] \label{rem:equivalent}
A HetNet model with $k^{th}$ tier transmit power $\left( \nbbE \left[\ncalX_k^{\frac{2}{\alpha}} \right] \right)^\frac{\alpha}{2} P_k$, no shadowing, and cell selection based on average biased receive power with selection bias $\{B_k\}$, leads to the same expression for rate coverage as given by Theorem~\ref{thm:Rc} for the generalized cell selection model. Due to this equivalence, the key results derived under no shadowing,  e.g., in~\cite{JoSanJ2012,SinDhiJ2013}, can be easily extended to the generalized cell selection model by appropriately scaling the transmit powers.
\end{remark}

\begin{figure}
\centering
\includegraphics[width=.95\columnwidth]{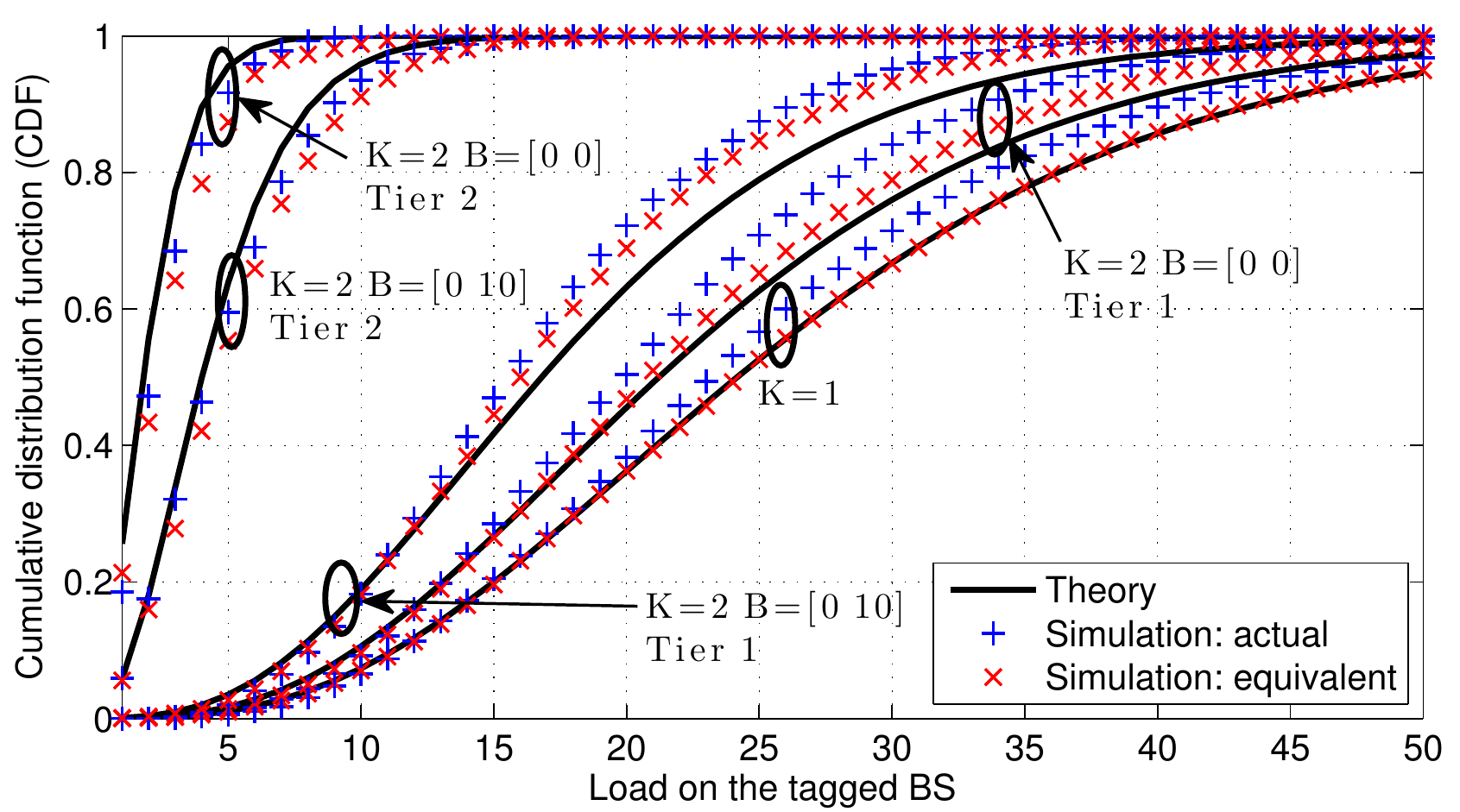} 
\includegraphics[width=.95\columnwidth]{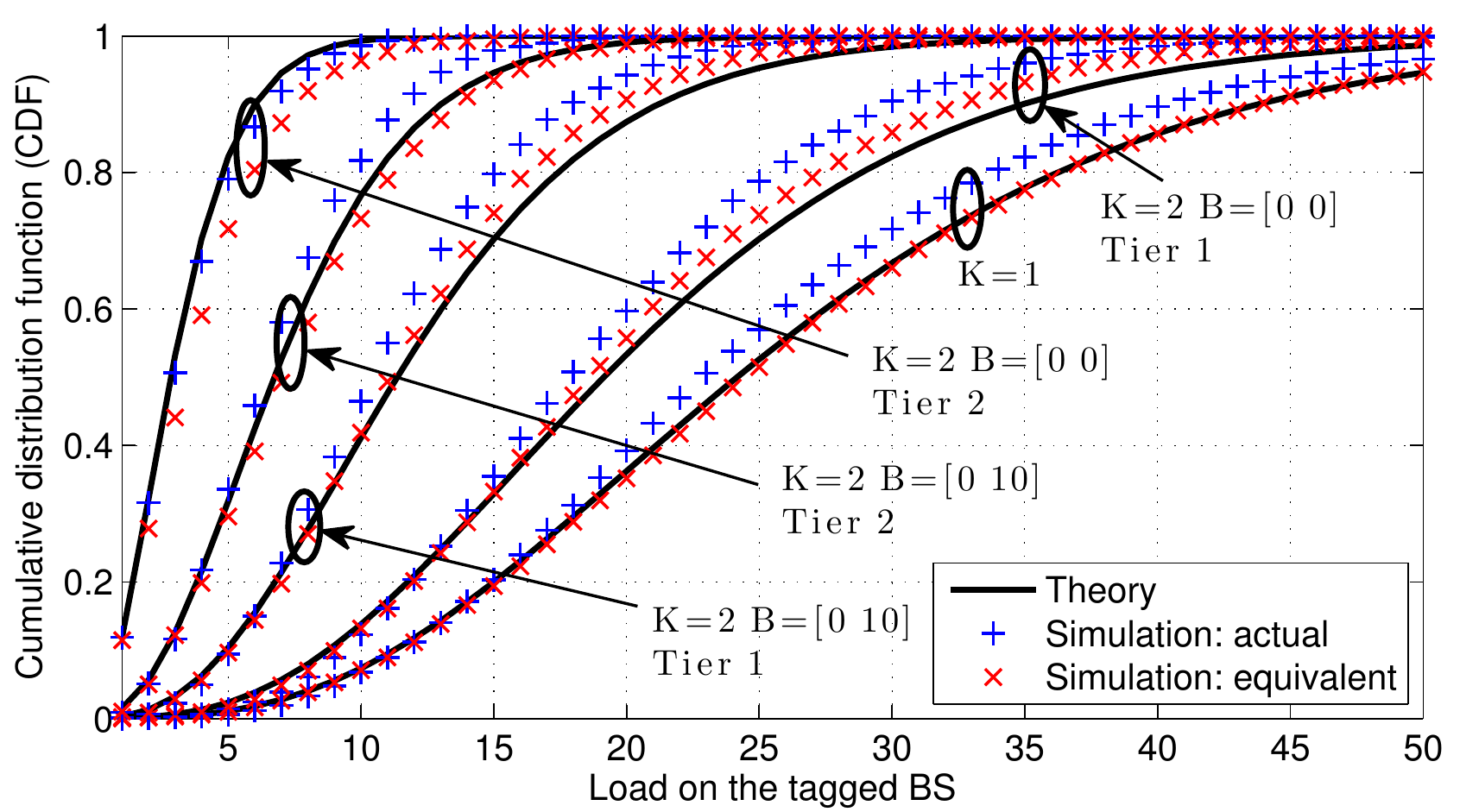}
\caption{CDF of load $\Psi_k$ with $\lambda_u = 20\lambda_1$, $\lambda_2 = 2\lambda_1$ for $K=2$, $\lambda_2=0$ for $K=1$, $\sigma = [4\ 4]$ dB {\em (first)} and $[4\ 8]$ dB {\em (second)}. $B$ is in dB.} 
\label{fig:load_tagged}
\end{figure}

\section{Numerical Results}
For numerical results, we consider a two tier HetNet, e.g., coexisting macro and pico cells, and assume that the shadowing distribution for each tier is log-normal with mean $\mu_k$ dB and standard deviation $\sigma_k$ dB. Throughout this section, we assume $\alpha=4$, $P_2 = P_1 - 23$ dB, $W = 10$ MHz, and $\mu = [0\ 0]$ dB. We first validate the load distribution given by Lemma~\ref{lem:load_tagged} in Fig.~\ref{fig:load_tagged}. In addition to the actual load under the generalized cell selection model, we also plot the load offered to the tagged BS under an equivalent model suggested in Remark~\ref{rem:equivalent}. We first note that the analytic approximation given by Lemma~\ref{lem:load_tagged} is fairly accurate, which along with the fact that the other components of rate expression, i.e., conditional $\sir$ distribution and selection probability, are exact, leads to a very tight approximation for rate distribution, as validated in Fig.~\ref{fig:Rate}. Comparing the rate distributions for two sub-figures of Fig.~\ref{fig:load_tagged}, we note that there is a natural balancing of load across tiers when $\nbbE \left[\ncalX_2^{\frac{2}{\alpha}} \right] > \nbbE \left[\ncalX_1^{\frac{2}{\alpha}} \right]$ compared to the baseline case of no shadowing, which by Remark~\ref{rem:equivalent} is equivalent to the case when $\nbbE \left[\ncalX_2^{\frac{2}{\alpha}} \right] = \nbbE \left[\ncalX_1^{\frac{2}{\alpha}} \right]$, as in the first sub-figure. This load balancing can be understood in terms of the equivalent model proposed in Remark~\ref{rem:equivalent}, i.e., in this case shadowing increases the effective transmit power of small cells relative to the baseline and hence expands their coverage areas. In Fig.~\ref{fig:Rate}, we plot the rate coverage and the fifth percentile rate, i.e., the rate value such that $95\%$ of the UEs achieve higher rate than this value. Both these results are consistent with the load balancing observations made in Fig.~\ref{fig:load_tagged}, e.g., the optimal selection bias that maximizes fifth percentile rate is smaller when $\nbbE \left[\ncalX_2^{\frac{2}{\alpha}} \right] > \nbbE \left[\ncalX_1^{\frac{2}{\alpha}} \right]$. Due to a smaller artificial bias, this case also achieves the highest rate. In Fig.~\ref{fig:Rate}, we also validate the rate expression by comparing it with the simulations and a special case in which the load on a tagged BS is assumed to be deterministic and equal to its mean $\nbbE[\Psi_k] $, given by Lemma~\ref{lem:load_tagged}.

\begin{figure}
\centering
\includegraphics[width=.95\columnwidth]{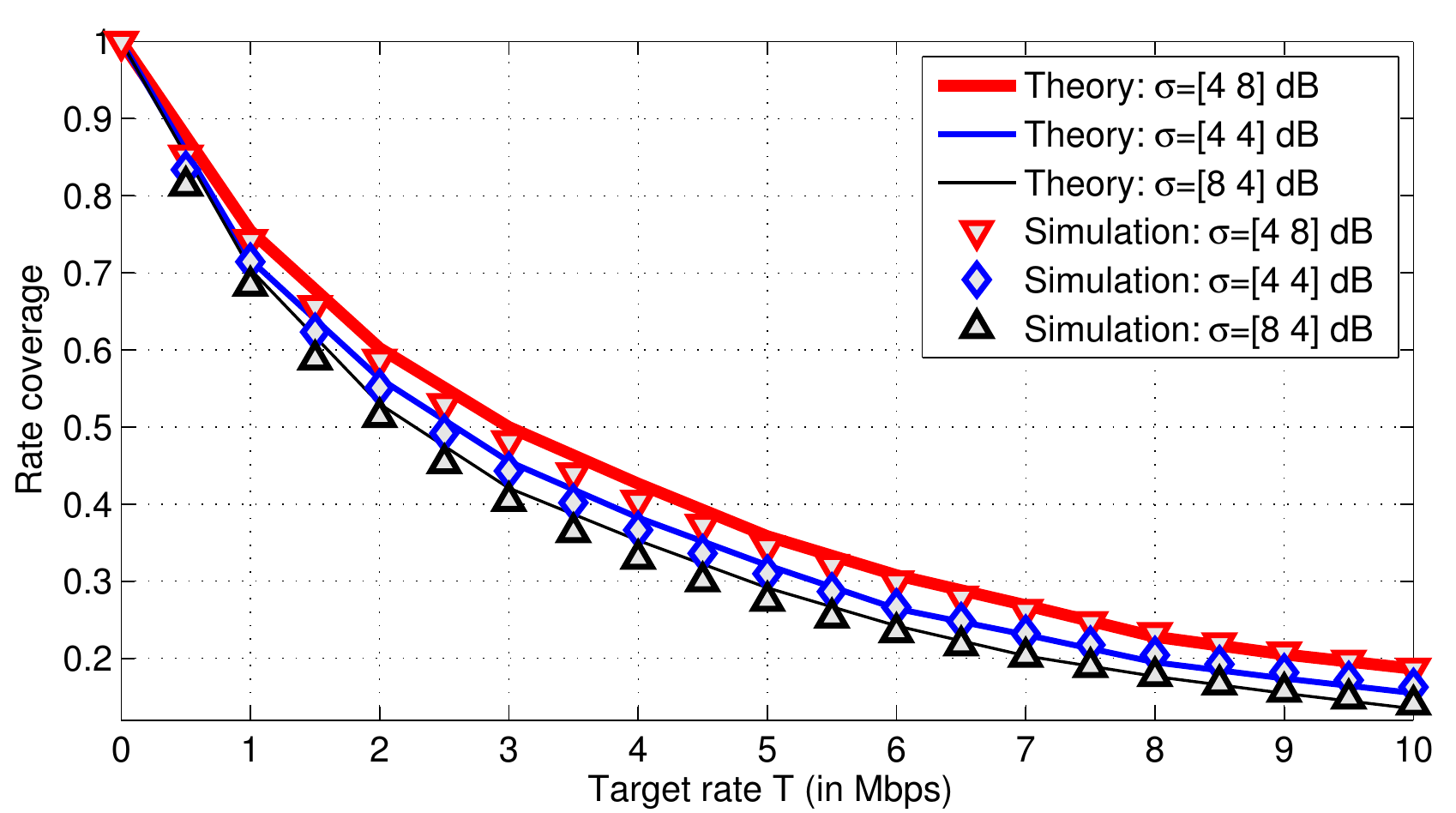}
\includegraphics[width=.95\columnwidth]{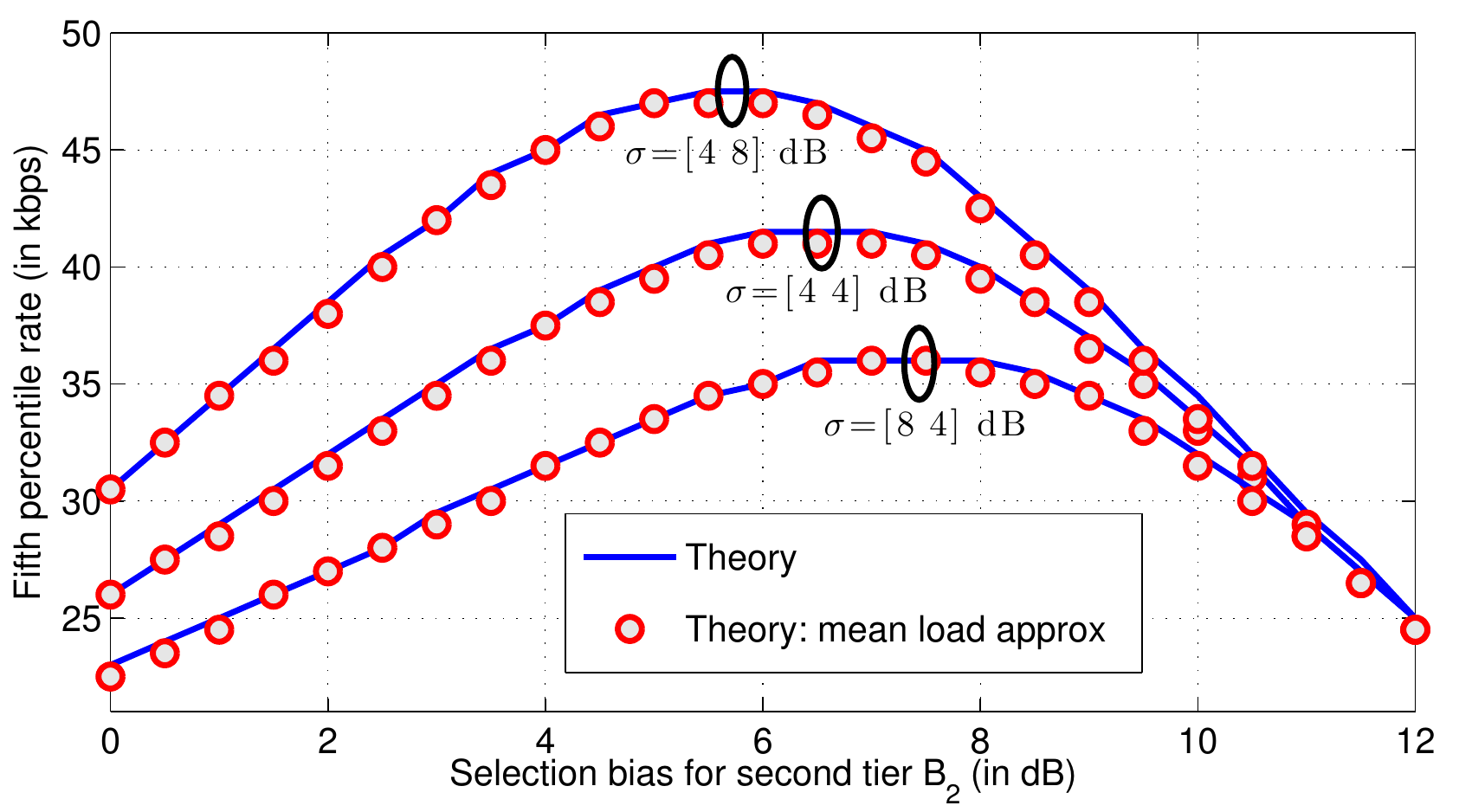}
\caption{{\em (first)} Rate coverage for $\lambda_2 = 5 \lambda_1$, $\lambda_u = 10\lambda_1$, $B=[0\ 5]$ dB. {\em (second)} Fifth percentile rate for $\lambda_2 = 5\lambda_1, \lambda_u = 40\lambda_1$, $B_1=0$ dB.}
\label{fig:Rate}
\end{figure}

\section{Conclusions}
We have derived the downlink rate distribution under a generalized cell-selection model, which explicitly differentiates between long-term channel effects such as shadowing and path-loss, and small-scale effects such as fading. We proposed an equivalent interpretation of this general cell selection model and showed that the effect of shadowing can be equivalently studied by appropriately scaling transmit powers. Using this equivalent interpretation, we studied the effect of shadowing on load balancing, and showed that in certain regimes shadowing naturally balances load across various tiers and hence reduces the need for artificial cell selection bias. 

\bibliographystyle{IEEEtran}
\bibliography{GeneralShadowing_Letter_v1.2}
\end{document}